\documentclass[letterpaper, 10 pt, conference]{ieeeconf}
\IEEEoverridecommandlockouts
\usepackage{times}
\usepackage{amsmath,amssymb,amsfonts}
\usepackage{graphicx}
\usepackage{microtype}
\usepackage{booktabs}
\usepackage{xcolor}
\usepackage{mathtools}
\usepackage{cite}
\renewcommand{\QED}{\QEDopen}
\usepackage{algorithm}
\usepackage{algorithmic}

\newtheorem{assumption}{Assumption}
\newtheorem{lemma}{Lemma}
\newtheorem{theorem}{Theorem}

\title{\LARGE \bf
 An Online Learning Approach
for Two-Player Zero-Sum Linear Quadratic Games
}

\author{%
Shanting Wang$^{1}$, Weihao Sun$^{1}$, and Andreas A. Malikopoulos$^{2}$, \textit{Senior Member, IEEE}
\thanks{This research was supported in part by NSF under Grants CNS-2401007, CMMI-2348381, IIS-2415478, and in part by MathWorks.}%
\thanks{$^{1}$S. Wang and W. Sun are with the Systems Engineering, Cornell University, Ithaca, NY, USA. {\tt\small sw997@cornell.edu, ws493@cornell.edu}}%
\thanks{$^{2}$A. A. Malikopoulos is with the Applied Mathematics, Systems Engineering, Mechanical Engineering, Electrical \& Computer Engineering, and School of Civil \& Environmental Engineering, Cornell University, Ithaca, NY, USA. {\tt\small amaliko@cornell.edu}}%
}

\begin{document}
\maketitle
\thispagestyle{empty}
\pagestyle{empty}

\begin{abstract}
In this paper, we present an online learning approach for two-player zero-sum linear quadratic games with unknown dynamics. We develop a framework combining regularized least squares model estimation, high probability confidence sets, and surrogate model selection to maintain a regular model for policy updates. We apply a shrinkage step at each episode to identify a surrogate model in the region where the generalized algebraic Riccati equation admits a stabilizing saddle point solution. We then establish regret analysis on algorithm convergence, followed by a numerical example to illustrate the convergence performance and verify the regret analysis.
\end{abstract}


\section{Introduction}
Dynamic games provide a natural framework for cyber-physical systems in which multiple decision makers interact through a shared dynamical process \cite{Ye2023gamesurvery,Xu2025when}. 
Among such models, linear quadratic (LQ) games are especially important because they retain a useful analytical structure inherited from the classical linear quadratic regulator (LQR) for a single agent \cite{lewis2012optimal}, where the optimal feedback laws are characterized by Riccati equations. 
These properties make the LQ zero-sum games a useful framework in a variety of application domains, including multi-robot coordination \cite{ren2025chance,le2023multirobot}, autonomous driving \cite{liu2023potential}, and security-critical control applications~\cite{xu2025game}, where agents with competing objectives interact strategically \cite{chremos2022CSMArticle}. Within this class, two-player zero-sum LQ games are of particular interest because the objectives of the two players are exactly opposed \cite{bacsar2008h}. At the same time, their equilibrium strategies remain analytically tractable since they can be obtained through the solution to generalized algebraic Riccati equations (GARE).

When the system dynamics are unknown, online learning must be carried out from observed data and controls \cite{Malikopoulos2022a,Malikopoulos2024,kounatidis2025combined}. In the zero-sum games, in particular, two players seek to improve their feedback strategies directly from interaction data without relying on the exact model \cite{ioannou1996robust}. In this setting, regret is a natural performance metric as it measures the deviations in performance from optimality over time, hence it can capture both transient learning efficiency and long-run control performance \cite{dean2018regret, mania2019certainty,simchowitz2020naive}.  In prior research efforts, regret guarantees for online LQR in single-agent settings are established \cite{dean2018regret, mania2019certainty,simchowitz2020naive}. Other studies have investigated approaches to derive optimal strategies for LQ games, using policy optimization \cite{zhang2019policy}, Q-learning \cite{al2007model}, and other data-driven iterative methods \cite{nortmann2024nash}. However, applying online regret minimization to an unknown zero-sum LQ game still lacks full exploration and understanding.

In this paper, we study an online learning approach for two-player zero-sum LQ games with unknown dynamics. In contrast with the classical single-agent LQR setting, this problem is more challenging because the strategies of the two players are coupled through the equilibrium structure, the interaction is adversarial, and the stability must be maintained throughout the learning process. We first apply a least-squares system identification method combined with a confidence region, and then propose a model-shrinkage step to identify a regular surrogate model that preserves stability. The corresponding GARE is solved on the selected model. Finally, we establish a sublinear regret bound based on the policy gap caused by estimation error, exploration error, and the transient bound. 
 
The main contributions of this paper include the following. We develop a certified online learning algorithm for a two-player zero-sum LQ game with unknown dynamics. We establish a sublinear regret bound of $O\!\left(\sqrt{T}\right)$ together with convergence properties of the proposed algorithm. We then provide the numerical results to illustrate that both parameter estimation and feedback policy updates work stably over time.

The remainder of the paper is organized as follows. In Section~\ref{sec:problem}, we present the problem formulation, including the system model, equilibrium, and regret. In Section~\ref{sec:algorithm}, we develop the certified online learning algorithm. In Section~\ref{sec:regret}, we provide the regret bound and convergence analysis. In Section~\ref{sec:simulation}, we present the numerical simulations to show the performance of the algorithm. Finally, in Section~\ref{sec:conclusion}, we draw concluding remarks and discuss potential future directions.

\section{Problem Formulation}
\label{sec:problem}

\subsection{Notation}
Let $\mathbb{R}$ denote the set of real numbers; $\mathbb{R}^n$ denotes the space of real column vectors of dimension $n$, and $\mathbb{R}^{n \times m}$ denotes the space of real matrices of dimension $n \times m$.
Let $I_n\in\mathbb{R}^{n\times n}$ denote the identity matrix. Denote by $(\cdot)^\top$, $\det(\cdot)$, $\otimes$ the transpose, determinant, and the Kronecker product. 
For a matrix $A\in\mathbb R^{n\times n}$, $A\succ0$ indicates $A$ is positive definite. $A \succeq0$ indicates $A$ is positive semidefinite.
$\|x\|_p$ denote the $p$-norm of a vector $x\in\mathbb R^n$. For a positive definite matrix $A\in\mathbb R^{n\times n}$, the weighted 2-norm of the vector $x\in\mathbb R^n$ is defined by $\|x\|_A=\sqrt{x^\top Ax}$. 
Let \(\operatorname{vec}:\mathbb{R}^{n\times d}\to\mathbb{R}^{nd}\) denote the column-wise vectorization operator. For a smooth map $f$ between normed spaces,
$D_x f[h]$ denotes the Fr\'{e}chet derivative of $f$
with respect to $x$ in direction $h$.

We consider a discrete-time linear system where the dynamics are characterized as
\begin{equation}
  x_{t+1} = A x_t + B_1 u_t + B_2 v_t + w_t,
  \label{eq:dynamics}
\end{equation}
where $x_t \in \mathbb{R}^n$ is the state, $u_t \in \mathbb{R}^{m_1}$ and $v_t \in \mathbb{R}^{m_2}$ are the control inputs of players 1 and 2, respectively, and
$A \in \mathbb{R}^{n \times n}$, $B_1 \in \mathbb{R}^{n \times m_1}$, $B_2 \in \mathbb{R}^{n \times m_2}$ are unknown constant matrices.
$x_0 \in \mathbb{R}^n$ is the initial state from a distribution $\mathcal{D}$.
The disturbance $\{w_t\}_{t \ge 0}$ is an $\{\mathcal{F}_t\}$-adapted
martingale difference sequence that is conditionally
$\sigma_w^2$-sub-Gaussian, with covariance
$\Sigma_w \coloneqq \mathbb{E}[w_t w_t^\top]$. 
The objective of player 1 (player 2) is to minimize (maximize) the infinite-horizon average cost function,
\begin{equation}
J(u, v)
:=
\limsup_{T\to\infty}\frac{1}{T}\,
\mathbb{E}\!\left[\sum_{t=0}^{T-1}c_t(x_t,u_t,v_t)\right],
\label{eq:avg_objective}
\end{equation}
where $c_t(x_t,u_t,v_t)
  = x_t^\top Q x_t + {u_t}^\top R_u u_t - {v_t}^\top R_v v_t$, $Q\in\mathbb{R}^{n\times n} \succeq 0$, $R_u\in\mathbb{R}^{m_1\times m_1} \succ 0$ and
$R_v\in\mathbb{R}^{m_2\times m_2}\succ 0$. 

If the $\inf$ and $\sup$ in \eqref{eq:avg_objective} can be interchanged, $J_\star$ is the value of the game at the Nash equilibrium defined by
\begin{equation}
{
  J_\star := \inf_{u}\ \sup_{v}\ J(u, v)
        = \sup_{v}\ \inf_{u}\ J(u, v).
}
\label{eq:value}
\end{equation}

We denote $P_\star$ as a solution to the GARE \cite{zhang2019policy} as:
\begin{equation}
\begin{aligned}
    P_* &= Q + A^\top P_* A \\
    & - \begin{bmatrix} A^\top P_* B_1 & A^\top P_* B_2 \end{bmatrix}
    H(P_*)^{-1}
    \begin{bmatrix} B_1^\top P_* A \\ B_2^\top P_* A \end{bmatrix},
    \label{eq:GARE}
\end{aligned}
\end{equation}
with
\begin{equation}
    H(P) = \begin{bmatrix} 
        R_u + B_1^\top P B_1 & B_1^\top P B_2 \\ 
        B_2^\top P B_1 & -R_v + B_2^\top P B_2 
    \end{bmatrix}.
    \label{eq:H}
\end{equation}

There exists a pair of linear feedback stabilizing policies

\begin{equation}
    u_t^* = -K_* x_t, \qquad v_t^* = -L_* x_t,
    \label{eq:NE-policy}
\end{equation}
where the NE gains $K_\star \in \mathbb{R}^{m_1 \times n}$ and 
$L_\star \in \mathbb{R}^{m_2 \times n}$ are given by
\begin{equation}
    \begin{bmatrix} K_\star \\ L_\star \end{bmatrix} 
    = H(P_\star)^{-1}
    \begin{bmatrix} B_1^\top P_\star A \\ B_2^\top P_\star A \end{bmatrix}.
    \label{eq:NE-gains}
\end{equation}

Explicitly, $K_\star$ and $L_\star$ can be written as
\begin{align}
    K_* &= \Big(R_u + B_1^\top P_\star B_1 \nonumber\\
    &\quad - B_1^\top P_\star B_2 
    \left(-R_v + B_2^\top P_\star B_2\right)^{-1} 
    B_2^\top P_\star B_1\Big)^{-1} \nonumber \\
    &\quad \times \Big(B_1^\top P_\star A \nonumber\\
    &\quad - B_1^\top P_\star B_2 
    \left(-R_v + B_2^\top P_\star B_2\right)^{-1} 
    B_2^\top P_\star A\Big),
    \label{eq:K-star} \\[6pt]
    L_* &= \Big({-R_v + B_2^\top P_\star B_2} \nonumber\\
    &\quad - B_2^\top P_\star B_1 
    \left(R_u + B_1^\top P_\star B_1\right)^{-1} 
    B_1^\top P_\star B_2\Big)^{-1} \nonumber \\
    &\quad \times \Big(B_2^\top P_\star A \nonumber\\
    &\quad - B_2^\top P_\star B_1 
    \left(R_u + B_1^\top P_\star B_1\right)^{-1} 
    B_1^\top P_\star A\Big).
    \label{eq:L-star}
\end{align}

In our framework, we impose the following assumption.
\begin{assumption}\label{assm:game}
The following conditions hold: 
\emph{(i)} $(A, [B_1, B_2])$ is stabilizable and $(A, Q^{1/2})$ is detectable; 
\emph{(ii)} $R_v - B_2^\top P_\star B_2 \succ 0$, where $P_\star$ is the 
minimal positive definite solution to the GARE \eqref{eq:GARE}.
\end{assumption}

When the true system matrices $(A, B_1, B_2)$ are known, the optimal policy pair \eqref{eq:NE-gains} can be computed from the solution to \eqref{eq:GARE}. We consider the setting where the dynamics are unknown and aim to learn the equilibrium policies online from observed data. The performance of the learning algorithm is measured through the cumulative regret
\begin{equation}
\mathrm{Reg}(T)
\;:=\;
\sum_{t=0}^{T-1}
\mathbb{E}\!\bigl[c_t(x_t,u_t,v_t)\bigr]
\;-\;
T\, J_\star,
\label{eq:regret_def}
\end{equation}
and our objective is to achieve sublinear growth in $T$. This notion of regret compares the cumulative cost incurred by the learning algorithm with the steady optimal cost, and therefore quantifies the performance loss caused by learning under unknown dynamics. To this end, we develop a certified online learning algorithm in Section \ref{sec:algorithm} and establish a regret bound $O(\sqrt{T})$ in Section \ref{sec:regret}.

\section{Algorithm Design}
\label{sec:algorithm}
In this section, we develop an online learning framework for the zero-sum LQ game under unknown dynamics.
A natural approach is certainty equivalence \cite{mania2019certainty}: estimate the system parameters $(A,B_1,B_2)$ from data, solve the GARE
for the estimated model, and apply the resulting feedback gain. The algorithm is summarized in Algorithm~\ref{alg:ccepi}.

\subsection{Parameter Estimation and Confidence Bounds}
\label{subsec:estimation}
Since the system matrices are unknown, the first step is to estimate them from observed state and input data through ridge regression, while quantifying the estimation uncertainty with a confidence set.

We define the parameter matrix
$\Theta_\star := [A\;\;B_1\;\;B_2]\in\mathbb R^{n\times d}$, where $d= n+m_1+m_2$. The dynamics can be written equivalently as 

\begin{equation*}
    x_{t+1}=\Theta_\star z_t + w_{t},
\label{system_theta}
\end{equation*}
where $z_t=[x_t^\top\;u_t^\top\;v_t^\top]^\top$ is the regressor vector \cite{ioannou1996robust}.

We collect data $\{(z_t,x_{t+1})\}$ at episode $k$, and define the regularized design matrix $V_k := \lambda\, I_d + \sum_{t=0}^{t_k - 1} z_t\, z_t^\top$. Here, $t_k$ denotes the starting time of episode $k$. After we vectorize the parameter $\theta_\star := \mathrm{vec}(\Theta_\star)\in\mathbb R^{nd}$ and define the regressor matrix $\Phi_t=z_t^\top\otimes I_n \in \mathbb R^{n\times nd}$,
we can write the system as
\begin{equation}
    x_{t+1}=\Phi_t\theta_\star + w_{t},
\label{linear regression}
\end{equation}
and the associated block design matrix as
\begin{equation}
\mathbf V_k
:=
I_n\otimes V_k
=
\lambda I_{nd}
+
\sum_{t=0}^{t_k-1}
(z_t z_t^\top)\otimes I_n.
\label{eq:block_design}
\end{equation}
We then compute the ridge estimate by solving 
\begin{equation}
\widehat{\theta}_k
:= \arg\min_{\theta \in \mathbb{R}^{nd}}\;
\sum_{t=0}^{t_k - 1} \|x_{t+1} - \Phi_t\theta\|^2
+ \lambda\, \|\theta\|_2^2,
\label{eq:theta_obj}
\end{equation}
where $\lambda > 0$ is a regularization parameter. 
The closed-form solution of \eqref{eq:theta_obj} is $\widehat{\theta}_k
= \mathbf V_k^{-1}\sum_{t=0}^{t_k-1}\left( z_t\otimes I_n \right)x_{t+1}$.

Having obtained $\widehat{\theta}_k$, we now quantify its uncertainty from a high-probability confidence ellipsoid. Because \eqref{linear regression} is a linear regression model in $\theta_\star$ with $\Phi_t$, a standard self-normalized martingale concentration result for a regularized linear regression applies \cite{abbasi2011improved}.
Assume a known bound \(\|\theta_\star\|_2\le S_\theta\). The following lemma constructs a confidence ellipsoid for $\theta_\star$.

\begin{lemma}
\label{lemma:confidence}
In the presence of disturbance term in \eqref{eq:dynamics}, with probability at least $1-\delta$, we have
$\theta_\star \in \mathcal{C}_k(\delta)$ for all episodes $k \ge 1$,
where
\begin{equation}
\mathcal{C}_k(\delta)
\;:=\;
\left\{\theta \in \mathbb{R}^{n d}:\;
\|\theta - \widehat{\theta}_k\|_{\mathbf V_k} \;\le\; \beta_k(\delta)\right\},
\label{eq:Ck}
\end{equation}
with the radius of the confidence ellipsoid
\begin{equation}
\beta_k(\delta)
=
\sigma_w\sqrt{2n\log\!\left(
\frac{\det(V_k)^{1/2}}{\det(\lambda I_d)^{1/2}}
\right)
+
2\log\!\frac{1}{\delta}}
+\sqrt{\lambda}\,S_\theta.
\label{eq:beta_vec_simplified}
\end{equation}
\end{lemma}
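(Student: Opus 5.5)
We would derive the bound from the self-normalized martingale inequality of \cite{abbasi2011improved}, using the Kronecker structure $\mathbf V_k = I_n\otimes V_k$ so that the vector problem reduces to $n$ scalar-output problems sharing the same regressors $z_t$. The starting point is the closed form of the ridge estimate. Substituting $x_{t+1}=\Phi_t\theta_\star+w_t$ and using $\Phi_t^\top\Phi_t=(z_tz_t^\top)\otimes I_n$ gives
\begin{equation*}
\widehat\theta_k-\theta_\star=\mathbf V_k^{-1}S_k-\lambda\mathbf V_k^{-1}\theta_\star,\qquad S_k:=\sum_{t=0}^{t_k-1}(z_t\otimes I_n)w_t .
\end{equation*}
Taking the $\mathbf V_k$-norm and applying the triangle inequality yields
\begin{equation*}
\|\widehat\theta_k-\theta_\star\|_{\mathbf V_k}\le \|S_k\|_{\mathbf V_k^{-1}}+\lambda\|\theta_\star\|_{\mathbf V_k^{-1}}.
\end{equation*}
The second term is handled by $\mathbf V_k\succeq\lambda I_{nd}$, which gives $\lambda\|\theta_\star\|_{\mathbf V_k^{-1}}\le\sqrt\lambda\,\|\theta_\star\|_2\le\sqrt\lambda\,S_\theta$. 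This produces the bias part of $\beta_k(\delta)$.

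For the noise term, we would rewrite $S_k=\mathrm{vec}(E_k)$ with $E_k:=\sum_t w_tz_t^\top\in\mathbb R^{n\times d}$. The Kronecker identity then gives
\begin{equation*}
\|S_k\|^2_{\mathbf V_k^{-1}}=\operatorname{tr}\bigl(E_kV_k^{-1}E_k^\top\bigr)=\sum_{i=1}^n\Bigl\|\sum_t z_t w_{t,i}\Bigr\|^2_{V_k^{-1}},
\end{equation*}
where $w_{t,i}$ is the $i$th coordinate of $w_t$. Each $w_{t,i}$ is an $\mathcal F_{t+1}$-measurable, conditionally $\sigma_w^2$-sub-Gaussian martingale difference. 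The regressor $z_t$ is $\mathcal F_t$-measurable, since the inputs are chosen from the past under the policies of Algorithm~\ref{alg:ccepi}. Theorem~1 of \cite{abbasi2011improved} therefore applies to each coordinate: with probability at least $1-\delta'$, simultaneously for all $t$,
\begin{equation*}
\Bigl\|\sum_t z_tw_{t,i}\Bigr\|^2_{V^{-1}}\le 2\sigma_w^2\log\bigl(\det(V)^{1/2}\det(\lambda I)^{-1/2}/\delta'\bigr).
\end{equation*}
The uniformity over time means the event holds for all episodes $k$ at once, because the $t_k$ are stopping-time-free deterministic indices of a single time-uniform event.

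The main obstacle is combining the $n$ coordinates into the stated radius, which has $2n\log(\cdot)$ multiplying the determinant term but only $2\log(1/\delta)$ rather than $2n\log(n/\delta)$. A naive union bound with $\delta'=\delta/n$ gives instead
\begin{equation*}
\|S_k\|^2_{\mathbf V_k^{-1}}\le 2n\sigma_w^2\log\bigl(\det(V_k)^{1/2}\det(\lambda I_d)^{-1/2}\bigr)+2n\sigma_w^2\log(n/\delta),
\end{equation*}
which is slightly weaker than claimed. To obtain the sharper form, we would first try applying the self-normalized bound directly to the vector martingale $S_k$ in $\mathbb R^{nd}$ with matrix regressors $\Phi_t^\top$. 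This uses the method of mixtures with the Gaussian prior $\mathcal N(0,\lambda^{-1}I_{nd})$ on the supermartingale $\exp\bigl(\langle\eta,S_k\rangle-\tfrac{\sigma_w^2}{2}\|\eta\|^2_{\mathbf V_k-\lambda I}\bigr)$, which is valid because $w_t$ is conditionally $\sigma_w^2$-sub-Gaussian as a vector. The resulting determinant ratio is $\det(\mathbf V_k)/\det(\lambda I_{nd})=(\det V_k/\det\lambda I_d)^n$, which gives exactly the $2n\log$ term together with a single $2\log(1/\delta)$. Taking square roots and adding the bias bound then yields $\|\widehat\theta_k-\theta_\star\|_{\mathbf V_k}\le\beta_k(\delta)$, that is, $\theta_\star\in\mathcal C_k(\delta)$ for all $k\ge1$ with probability at least $1-\delta$.
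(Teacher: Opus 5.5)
Your proposal is correct and, in its final form, is the paper's argument: apply the self-normalized bound of \cite{abbasi2011improved} directly to the $nd$-dimensional regression $x_{t+1}=\Phi_t\theta_\star+w_t$, with the bias $\sqrt{\lambda}S_\theta$ coming from $\mathbf V_k\succeq\lambda I_{nd}$, then use $\det(\mathbf V_k)=\det(V_k)^n$ and $\det(\lambda I_{nd})=\det(\lambda I_d)^n$ to obtain the stated radius; your method-of-mixtures sketch just makes explicit the vector-noise version that the paper cites without comment. Two harmless slips: with your normalization of the supermartingale the mixing prior should be $\mathcal N(0,(\lambda\sigma_w^2)^{-1}I_{nd})$ (otherwise the effective regularizer is $\lambda/\sigma_w^2$), and the $t_k$ chosen by the doubling rule are data-dependent rather than deterministic, which causes no problem only because the bound holds uniformly over all $t$.
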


\begin{proof}
Applying the self-normalized martingale bound to \eqref{linear regression}, with probability at least $1-\delta$, yields 
\begin{equation*}
\|\widehat{\theta}_k-\theta_\star\|_{\mathbf V_k}
\le
\sigma_w\sqrt{2\log\!\left(
\frac{\det(\mathbf V_k)^{1/2}}{\det(\lambda I_{nd})^{1/2}}
\cdot\frac{1}{\delta}
\right)}
+\sqrt{\lambda}\,\|\theta_\star\|_2.
\end{equation*}
Since $\mathbf V_k=I_n\otimes V_k$, we have $\det(\mathbf V_k)=\det(V_k)^n$ and $\det(\lambda I_{nd})=\det(\lambda I_d)^n$. Substituting these identities and using $\|\theta_\star\|_2\le S_\theta$ gives \eqref{eq:beta_vec_simplified}, which concludes the proof.
\end{proof}

\begin{algorithm}[t]
\caption{Certified Online Learning Algorithm}
\label{alg:ccepi}
\begin{algorithmic}[1]
\REQUIRE Update times $\mathcal T=\{t_k\}_{k\ge0}$, regularization $\lambda>0$, confidence level $\delta\in(0,1)$, regularity margins $(\mu,\gamma)$, initial certified model $\widetilde{\theta}_0\in\theta_{\mathrm{reg}}(\mu,\gamma)$, initial gain $K_0$,$L_0$, exploration sequence $\{\eta_t,\zeta_t\}_{t\ge0}$.
\STATE Initialize $V_0 \leftarrow \lambda I_d$, $K \leftarrow K_0$, $L \leftarrow L_0$, $k \leftarrow 0$.
\FOR{$t=0,1,2,\ldots$}
    \STATE Observe the current state $x_t$.
    \STATE Apply $u_t$ and $v_t$, and observe the $x_{t+1}$.
    \STATE Form the data pair $(z_t,x_{t+1}),$
    and update the information matrix $V_{t+1}$.
    \IF{$t+1=t_{k+1}$}
        \STATE Using all data $\{(z_\tau,x_{\tau+1})\}_{\tau=0}^{t}$, compute $\widehat{\theta}_{k+1}$
        and construct
        $\mathcal C_{k+1}(\delta)$
        \STATE Compute $\alpha_{k+1}\in[0,1]$ and set
        \[
        \widetilde{\theta}_{k+1}
        \leftarrow
        (1-\alpha_{k+1})\widetilde{\theta}_k+\alpha_{k+1}\widehat{\theta}_{k+1}.
        \]
        \STATE Solve the GARE at $\widetilde{\theta}_{k+1}$ to obtain $(K_{k+1},L_{k+1})$.
        \STATE Update $K \leftarrow K_{k+1}$, $L \leftarrow L_{k+1}$ and set $k \leftarrow k+1$.
    \ENDIF
\ENDFOR
\end{algorithmic}
\end{algorithm}

Lemma \ref{lemma:confidence} provides a high-probability confidence set for the unknown parameter $\theta_\star$ centered at the estimate $\widehat{\theta}_k$. However, it does not guarantee that $\widehat{\theta}_k$ itself lies in the parameter region where the GARE admits a stabilizing saddle-point solution, so Assumption \ref{assm:game} (ii) may not be satisfied. The next subsection introduces a certified surrogate selection step that resolves this difficulty.

\subsection{Certified Surrogate Model Selection}
\label{subsec:safe_model}
To ensure every policy update is feasible, instead of solving the GARE on $\widehat{\theta}_k$, we next introduce a certified surrogate selection step that selects a regularized model $\widetilde{\theta}_k$ inside the confidence set that
satisfies explicit GARE feasibility checks. We formalize the required regularity as follows. 

Let $\theta_{\mathrm{reg}}(\mu,\gamma)$ denote the set of parameters $\theta\in\mathbb{R}^{nd}$, with corresponding matrices $(A,B_1,B_2)$ as defined in Section~\ref{subsec:estimation} such that

(i) the GARE admits a stabilizing solution $P(\theta)\succeq 0$;

(ii) the solvability margin holds, \begin{equation*}
R_v - B_2^\top P(\theta) B_2 \succeq \mu I;
\label{eq:margin_reg}
\end{equation*}

(iii) the induced closed-loop matrix $A_{\mathrm{cl}}(\theta):=A-B_1K(\theta)-B_2L(\theta)$ is Schur stable with spectral radius at most $1-\gamma$.

At episode k, let $\mathcal C_k(\delta)$ be the confidence set from Lemma~\ref{lemma:confidence}. We then set the deployed surrogate model to satisfy $\widetilde{\theta}_k\in \mathcal C_k(\delta)\cap \theta_{\mathrm{reg}}(\mu,\gamma)$. Here, $\widetilde{\theta}_k$ must remain inside the current confidence region while also lying in a parameter region where the GARE is solvable with explicit regularity margins.

In implementation, we use a constructive
shrinkage procedure. At update $k$, given
the current estimate $\widehat{\theta}_k$ and the previous certified surrogate
$\widetilde{\theta}_{k-1}$, we search along the segment joining these two points
\begin{equation}
\theta_k(\alpha_k)
:=
(1-\alpha_k)\widetilde{\theta}_{k-1}
+\alpha_k \widehat{\theta}_k,
\qquad \alpha_k\in[0,1],
\label{eq:segment_search_vec}
\end{equation}

where $\alpha_k
:=
\sup\Bigl\{
\alpha\in[0,1]:
\theta_k(\alpha)\in
\mathcal C_k(\delta)\cap\theta_{\mathrm{reg}}(\mu,\gamma)
\Bigr\}$.
Since
$\widetilde{\theta}_k\in \mathcal C_k(\delta)$ and
$\theta_\star\in \mathcal C_k(\delta)$ on the high-probability event of
Lemma~\ref{lemma:confidence}, the surrogate remains uniformly close to the
true parameter. At the same time,
$\widetilde{\theta}_k\in\theta_{\mathrm{reg}}(\mu,\gamma)$ ensures that every
GARE solve used by the algorithm admits a stabilizing saddle-point solution
with the prescribed margins.

Having obtained $\widetilde{\theta}_k$, we next convert the ellipsoidal confidence bound of Lemma~\ref{lemma:confidence} into an $\ell_2$ rate, which the regret analysis in Section~\ref{sec:regret} requires.

\begin{assumption}
\label{assum:excitation}
There exist constants $\underline{\nu}>0$ and $k_0\ge 1$ such that, on the
high-probability event of Lemma~\ref{lemma:confidence},
\begin{equation}
\lambda_{\min}(V_k)\ge \underline{\nu}\, t_k,
\qquad \forall k\ge k_0.
\label{eq:info_growth}
\end{equation}
\end{assumption}
Assumption~\ref{assum:excitation} is a persistent-excitation requirement on the
full regressor $z_t$. In the zero-sum setting, 
sufficient variation must be present in the state and realized action for both players.

In practice, this may be enforced by injecting
bounded excitation through suitable perturbation signals together with a nondegeneracy condition on the induced adversary sequence.

\begin{lemma}\label{lemma:lemma2}
Suppose that the condition in Lemma~\ref{lemma:confidence} and Assumption~\ref{assum:excitation} holds, there exists $k_0 > 1$ such that for all $k\ge k_0$, the surrogate ridge estimate $\widetilde \theta_k$ satisfies the bound
\begin{equation}
\|\widetilde{\theta}_k-\theta_\star\|_2
\le
\frac{2\beta_k(\delta)}{\sqrt{\underline{\nu}\,t_k}}.
\label{eq:theta_l2_linear}
\end{equation}
\end{lemma}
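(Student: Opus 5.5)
The plan is to work on the high-probability event of Lemma~\ref{lemma:confidence} and combine three ingredients: a triangle inequality in the $\mathbf V_k$-norm, membership of both $\theta_\star$ and $\widetilde{\theta}_k$ in $\mathcal C_k(\delta)$, and the norm comparison supplied by Assumption~\ref{assum:excitation}. The factor $2$ in \eqref{eq:theta_l2_linear} is the diameter of the ellipsoid $\mathcal C_k(\delta)$. The factor $1/\sqrt{\underline{\nu}\,t_k}$ is what turns a $\mathbf V_k$-weighted norm into a Euclidean one.

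\emph{Step 1 (both points lie in the ellipsoid).} By Lemma~\ref{lemma:confidence}, $\|\theta_\star-\widehat{\theta}_k\|_{\mathbf V_k}\le\beta_k(\delta)$ for all $k\ge1$ simultaneously. By construction of the shrinkage step, $\widetilde{\theta}_k=\theta_k(\alpha_k)\in\mathcal C_k(\delta)\cap\theta_{\mathrm{reg}}(\mu,\gamma)$, so also $\|\widetilde{\theta}_k-\widehat{\theta}_k\|_{\mathbf V_k}\le\beta_k(\delta)$. The triangle inequality then gives
\begin{equation*}
\|\widetilde{\theta}_k-\theta_\star\|_{\mathbf V_k}\le 2\beta_k(\delta).
\end{equation*}

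\emph{Step 2 (convert to the $\ell_2$ norm).} Since $\mathbf V_k=I_n\otimes V_k$, its eigenvalues are those of $V_k$, each repeated $n$ times. Hence $\lambda_{\min}(\mathbf V_k)=\lambda_{\min}(V_k)$, and for any $\theta$ we have $\|\theta\|_{\mathbf V_k}^2\ge\lambda_{\min}(V_k)\|\theta\|_2^2$. For $k\ge k_0$, Assumption~\ref{assum:excitation} gives $\lambda_{\min}(V_k)\ge\underline{\nu}\,t_k$. Therefore
\begin{equation*}
\|\widetilde{\theta}_k-\theta_\star\|_2\le\frac{\|\widetilde{\theta}_k-\theta_\star\|_{\mathbf V_k}}{\sqrt{\underline{\nu}\,t_k}}\le\frac{2\beta_k(\delta)}{\sqrt{\underline{\nu}\,t_k}},
\end{equation*}
which is \eqref{eq:theta_l2_linear}, with the same $k_0$ as in the assumption.

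\emph{Main obstacle.} The only delicate point is justifying that $\widetilde{\theta}_k$ really lies in $\mathcal C_k(\delta)$. The set defining $\alpha_k$ must be nonempty, and the supremum must be attained.

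\begin{itemize}
\item \emph{Attainment.} I would argue that $\mathcal C_k(\delta)$ is closed. The set $\theta_{\mathrm{reg}}(\mu,\gamma)$ is defined by non-strict inequalities on quantities that depend continuously on $\theta$: the stabilizing GARE solution depends continuously on $\theta$ near points with a uniform margin. So the feasible $\alpha$-set is closed and the supremum is attained.
\item \emph{Nonemptiness.} If $\widetilde{\theta}_{k-1}\notin\mathcal C_k(\delta)$, the segment might miss the ellipsoid entirely. In that case I would adopt the convention stated in the text, that the deployed surrogate is chosen in $\mathcal C_k(\delta)\cap\theta_{\mathrm{reg}}(\mu,\gamma)$. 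This intersection is nonempty on the event whenever $\theta_\star\in\theta_{\mathrm{reg}}(\mu,\gamma)$, which follows from Assumption~\ref{assm:game} for sufficiently small margins $(\mu,\gamma)$.
\end{itemize}

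With membership secured, the rest is the two-line computation above.
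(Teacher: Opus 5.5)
Your proof is correct and matches the paper's: both pass through $\widehat{\theta}_k$, use $\theta_\star,\widetilde{\theta}_k\in\mathcal C_k(\delta)$, and use $\lambda_{\min}(\mathbf V_k)=\lambda_{\min}(V_k)\ge\underline{\nu}\,t_k$. The only difference is ordering: you apply the triangle inequality in the $\mathbf V_k$-norm and then convert to $\ell_2$, while the paper converts each piece first, and both give the same constant. Your membership discussion goes beyond the paper, which simply takes $\widetilde{\theta}_k\in\mathcal C_k(\delta)$ as given by construction; note that the segment always meets $\mathcal C_k(\delta)$ at its center $\widehat{\theta}_k$ ($\alpha=1$), so the real risk is missing $\mathcal C_k(\delta)\cap\theta_{\mathrm{reg}}(\mu,\gamma)$, not missing the ellipsoid.
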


\begin{proof}
On the high probability event of Lemma~\ref{lemma:confidence}, the true parameter satisfies $\|\widehat \theta_k-\theta_\star\|_{\mathbf V_k} \leq \beta_k(\delta)$. Since $\mathbf V_k$ is positive definite, we have $\|\widehat \theta_k-\theta_\star\|^2_{\mathbf V_k}\ge \lambda_{\min}(\mathbf V_k)\|\widehat \theta_k-\theta_\star\|_2^2$. Therefore, under Assumption~\ref{assum:excitation}, there exist $k_0 \ge 1$ such that $\forall k\ge k_0$, 
\begin{equation*}
\|\widehat \theta_k-\theta_\star\|_2\le\frac{\beta(\delta)}{\sqrt{\lambda_{\min}(\mathbf V_k)}} = \frac{\beta_k(\delta)}{\sqrt{\lambda_{\min}(V_k)}}\le \frac{\beta_k(\delta)}{\sqrt{\underline{\nu}\,t_k}}.
\end{equation*}
This bounds the estimation error of the ridge estimate $\widehat \theta_k$. Since the surrogate $\widetilde\theta_k$ is chosen from the confidence set also centered at $\widehat \theta_k$, the shrinkage step also satisfies a comparable bound such that 
\begin{equation*}
\|\widetilde \theta_k-\widehat \theta_k\|_2\le\frac{\beta_k(\delta)}{\sqrt{\underline{\nu}\,t_k}}.
\end{equation*}
Thus, by the triangle inequality, we can derive an estimation error bound on $\tilde \theta_k$ for some $k_0$:
\begin{equation}
\|\widetilde{\theta}_k-\theta_\star\|_2
\le
\frac{2\beta_k(\delta)}{\sqrt{\underline{\nu}\,t_k}},
\qquad \forall k\ge k_0.
\end{equation}
This concludes the proof.
\end{proof}

\subsection{Certified Certainty-Equivalent Control Update}
\label{subsec:control_update}

Given the certified surrogate model $\widetilde{\theta}_k$, we solve the GARE
associated with $\widetilde{\theta}_k$ and obtain the corresponding stabilizing
saddle-point gains $(K_k,L_k)$. The minimizer then applies
\begin{align}
  u_t &= -K_k x_t + \eta_t, \\
  v_t &= -L_k x_t + \zeta_t,
\end{align}
for $t = t_k, \ldots, t_{k+1}-1$, where $\eta_t, \zeta_t$ are exploration signals chosen so that the resulting regressor
sequence 
$z_t$ satisfies the information-growth
condition in Assumption~\ref{assum:excitation}. 

The controller is updated only when the design matrix $V_t$ has grown sufficiently, as measured by a doubling trick condition~\cite{abbasi2011regret}: 
\begin{equation}
t_{k+1}
:=
\min\left\{
t>t_k:\;
\det\bigl(V_t\big)\ge 2\,\det\bigl(V_{t_k}\bigr)
\right\}.
\label{eq:episode}
\end{equation}
This prevents unnecessary re-estimation when the newly collected data do not significantly improve identification. As a result, the number of updates up to time $T$ is at most logarithmic in $\det(V_T)$, \emph{i.e.}, $O\!\bigl(\log\det (V_T)\bigr)$, by the same determinant growth argument used in Lemma~8 of \cite{abbasi2011regret}.

\subsection{Local Perturbation of the GARE Saddle Solution}
\label{subsec:perturbation_main}
In this part, we present the key regularity statement, which will be used later in the regret analysis.
It shows that small model error induces proportionally small perturbations in the Riccati solution
and in the associated saddle gains.
For a given $\theta$ in a neighborhood of $\theta_\star$, let $P(\theta)$ denote the stabilizing solution to the GARE with system matrices $(A,B_1,B_2)$ extracted from $\theta$. To study how the $P$ varies with $\theta$, define the GARE residual map
\begin{equation}
  F(P,\theta) := P - \Phi(P,\theta),
  \label{eq:F_def}
\end{equation}
where $\Phi(P,\theta) := Q + A_{\mathrm{cl}}^\top P\, A_{\mathrm{cl}}
+ K^\top R_u K - L^\top R_v L$
and $A_{\mathrm{cl}} := A - B_1 K - B_2 L$ with $(K,L)$ being the stage saddle pair. Then $P(\theta)$ is characterized locally by $F(P(\theta),\theta)=0$.

\begin{theorem}
\label{thm:lipschitz_main}
There exists a neighborhood \(U\) of
\(\theta_\star\) and finite constants \(C_P,C_K,C_L>0\) such that, for every \(\theta\in U\),
\begin{equation}
\|P(\theta)-P_\star\|_F
\le
C_P\|\theta-\theta_\star\|_2,
\label{eq:P_lipschitz_main}
\end{equation}
and
\begin{equation}
\begin{gathered}
\|K(\theta)-K_\star\|_F
\le
C_K\|\theta-\theta_\star\|_2, \\
\|L(\theta)-L_\star\|_F
\le
C_L\|\theta-\theta_\star\|_2.
\end{gathered}
\label{eq:KL_lipschitz_main}
\end{equation}
Moreover, the maps
\(\theta\mapsto P(\theta)\),
\(\theta\mapsto K(\theta)\), and
\(\theta\mapsto L(\theta)\)
are continuously differentiable on \(U\).
\end{theorem}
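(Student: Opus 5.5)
The plan is to apply the implicit function theorem to the residual map $F(P,\theta)$ in \eqref{eq:F_def} at $(P_\star,\theta_\star)$, with $P$ ranging over symmetric $n\times n$ matrices. Since $F(P_\star,\theta_\star)=0$, this yields a $C^1$ map $\theta\mapsto P(\theta)$ on a neighborhood $U$ of $\theta_\star$ with $F(P(\theta),\theta)=0$. The Lipschitz bounds \eqref{eq:P_lipschitz_main}--\eqref{eq:KL_lipschitz_main} then follow by the mean value inequality on a convex compact sub-neighborhood.

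\textbf{Smoothness of $F$.} First I will check that $F$ is $C^1$ (indeed real-analytic) near $(P_\star,\theta_\star)$. The stage saddle pair is $[K;L]=H(P)^{-1}[B_1^\top P A;\,B_2^\top P A]$, as in \eqref{eq:NE-gains}. By Assumption~\ref{assm:game}(ii), the block $-R_v+B_2^\top P_\star B_2\prec0$. The Schur complement $R_u+B_1^\top P_\star B_1-B_1^\top P_\star B_2(-R_v+B_2^\top P_\star B_2)^{-1}B_2^\top P_\star B_1$ is then $\succeq R_u\succ0$, so $H(P_\star)$ is invertible at $\theta_\star$. By continuity of the determinant, $H(P)$ remains invertible for $(P,\theta)$ near $(P_\star,\theta_\star)$. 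Hence $(K,L)$ are rational, and so smooth, functions of $(P,\theta)$, and so is $\Phi$.

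\textbf{Invertibility of the partial derivative (main step).} I will compute $D_PF(P_\star,\theta_\star)$. The key observation is the envelope/stationarity property: $(K,L)$ is a saddle (critical) point of the quadratic stage function
\[
(K,L)\mapsto Q+K^\top R_uK-L^\top R_vL+(A-B_1K-B_2L)^\top P(A-B_1K-B_2L).
\]
Therefore the terms arising from differentiating $(K,L)$ with respect to $P$ cancel, leaving
\[
D_PF(P_\star,\theta_\star)[\Delta]=\Delta-A_{\mathrm{cl},\star}^\top\Delta A_{\mathrm{cl},\star}.
\]
This is a Stein (discrete Lyapunov) operator. It is invertible on symmetric matrices exactly when no product of two eigenvalues of $A_{\mathrm{cl},\star}$ equals $1$, which holds because $A_{\mathrm{cl},\star}=A-B_1K_\star-B_2L_\star$ is Schur stable ($P_\star$ is the stabilizing solution). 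Its inverse is $\Delta\mapsto\sum_{j\ge0}(A_{\mathrm{cl},\star}^\top)^jE\,A_{\mathrm{cl},\star}^j$, with norm bounded in terms of the decay of $\|A_{\mathrm{cl},\star}^j\|$. I expect this step to be the main obstacle: the cancellation must be justified carefully for a saddle point rather than a minimum, using only first-order stationarity, which holds regardless of definiteness. I would verify it by writing $\Phi$ in the equivalent form $Q+A^\top PA-[\cdot]H(P)^{-1}[\cdot]$ and differentiating directly, checking that the cross terms sum to zero.

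\textbf{Conclusion.} The implicit function theorem now gives a neighborhood $U$ and a $C^1$ map $P(\theta)$ with
\[
D_\theta P=-(D_PF)^{-1}D_\theta F.
\]
Since this $P(\theta)$ is the branch continuing $P_\star$, continuity of the spectrum ensures its closed loop stays Schur stable on $U$ (after shrinking $U$). Hence it coincides with the stabilizing solution, by uniqueness of the stabilizing GARE solution. The gains $K(\theta),L(\theta)$ are compositions of $C^1$ maps: $\theta\mapsto(P(\theta),\theta)\mapsto H(P)^{-1}[B_1^\top PA;B_2^\top PA]$. They are therefore $C^1$ on $U$. Finally, I will shrink $U$ to a closed ball, on which the derivative norms are bounded by continuity. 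The mean value inequality then yields the constants
\[
C_P=\sup_U\|D_\theta P\|,\qquad C_K=\sup_U\|D_\theta K\|,\qquad C_L=\sup_U\|D_\theta L\|,
\]
and hence \eqref{eq:P_lipschitz_main} and \eqref{eq:KL_lipschitz_main}, with Frobenius and $\ell_2$ norms related through $\mathrm{vec}$.
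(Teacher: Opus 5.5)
Your proposal is correct and follows essentially the same route as the paper's proof. The envelope/stationarity argument reduces $D_PF(P_\star,\theta_\star)$ to the Stein operator $X\mapsto X-A_{\mathrm{cl},\star}^\top X A_{\mathrm{cl},\star}$, invertible by Schur stability; the implicit function theorem gives a $C^1$ branch $P(\theta)$; the gains inherit $C^1$ regularity through nonsingularity of $H(P)$; and the mean value theorem yields the Lipschitz constants. Your extra details fill in points the paper leaves implicit: the Schur-complement argument that $H(P_\star)$ is invertible, restricting to a convex ball for the mean value inequality, and identifying the implicit-function branch with the stabilizing GARE solution via uniqueness.
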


\begin{proof}
By the envelope theorem, the stationarity conditions on the
saddle gains eliminate their derivatives from $D_PF$, so that
$D_PF(P_\star,\theta_\star)$ reduces to the Lyapunov operator $\mathcal{L}_\star(X) = X - A_{{\rm cl},\star}^\top X\, A_{{\rm cl},\star}$.
Because $A_{{\rm cl},\star}$ is Schur stable, $\mathcal{L}_\star$ is invertible.
Since \(F\) is smooth on the strong regularity set, the implicit function theorem yields a local
$C^1$ map $\theta \mapsto P(\theta)$ satisfying $F(P(\theta),\theta)=0$.
A uniform bound on $(D_P F)^{-1}$
and on $D_\theta F$ over a small neighborhood of $\theta_\star$ then implies a uniform bound on \(D_\theta P(\theta)\), and hence gives
\eqref{eq:P_lipschitz_main}.
The saddle gains $(K(\theta), L(\theta))$ are determined by the first-order optimality conditions of the stage game, which define them as smooth functions of $(P,\theta)$. Since $\theta\mapsto P(\theta)$ is $C^1$, so are
$\theta\mapsto K(\theta)$ and $\theta\mapsto L(\theta)$,
giving \eqref{eq:KL_lipschitz_main}. The full proof is given in
Appendix~\ref{app:lipschitz_proof}.
\end{proof}

Let \(U\) and \(C_P,C_K,C_L\) be as in Theorem~\ref{thm:lipschitz_main}.
If \(\widetilde{\theta}_k\in U\), then
\begin{equation}
\|P(\widetilde{\theta}_k)-P_\star\|_F
\le
C_P\|\widetilde{\theta}_k-\theta_\star\|_2,
\label{eq:corP_main}
\end{equation}
and
\begin{equation}
\begin{gathered}
\|K(\widetilde{\theta}_k)-K_\star\|_F
\le
C_K\|\widetilde{\theta}_k-\theta_\star\|_2, \\
\|L(\widetilde{\theta}_k)-L_\star\|_F
\le
C_L\|\widetilde{\theta}_k-\theta_\star\|_2.
\end{gathered}
\label{eq:corKL_main}
\end{equation}

\section{Convergence Analysis}
\label{sec:regret}

In this section, we analyze the performance by establishing the regret guarantee for Algorithm~\ref{alg:ccepi}. We provide a regret decomposition into transient, exploration, and policy gap terms and bound each term separately.
 
\subsection{Saddle-Point Cost Perturbation}

To characterize how deviations of the gains from the equilibrium pair $(K_\star, L_\star)$ affect the closed-loop average cost, we begin by controlling the policy gap term through a local perturbation bound on the saddle point cost around the equilibrium gains.
 
For stabilizing gains $(K,L)$ applied under the true dynamics
$\Theta_\star$, we define the closed-loop average cost
\begin{equation}
J(K,L)
\;:=\;
\mathrm{tr}\!\bigl(P_{K,L}\,\Sigma_w\bigr),
\label{eq:JKL}
\end{equation}
where $P_{K,L}\succeq 0$ solves the closed-loop Lyapunov equation
\begin{equation}
P_{K,L}
=
Q + K^\top R_u K - L^\top R_v L
+
A_{\mathrm{cl}}^\top P_{K,L}\, A_{\mathrm{cl}},
\label{eq:lyap_KL}
\end{equation}
with $A_{\mathrm{cl}}=A-B_1K-B_2L$. 

\begin{lemma}
\label{lemma:quadratic_cost}
    There exists a neighborhood $\mathcal{N}$ of $(K_\star,L_\star)$ and a
constant $C_J>0$ such that, for all $(K,L)\in\mathcal{N}$ with
$\rho(A_{\mathrm{cl}})<1$,
\begin{equation}
\bigl|J(K,L)-J_\star\bigr|
\;\le\;
C_J\!\left(\|\Delta K\|_F^2+\|\Delta L\|_F^2\right),
\label{eq:cost_gap_quadratic}
\end{equation}
where $\Delta K:=K-K_\star$, $\Delta L:=L-L_\star$.
\end{lemma}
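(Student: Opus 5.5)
The plan is to prove the bound through an exact cost-difference identity for the closed-loop Lyapunov value, and then to use the stationarity of the saddle gains to show that the linear terms vanish. First I will check that $J_\star=J(K_\star,L_\star)=\mathrm{tr}(P_\star\Sigma_w)$. To do this, I will verify that $P_\star$ solves \eqref{eq:lyap_KL} at $(K_\star,L_\star)$: substituting \eqref{eq:NE-gains} into \eqref{eq:lyap_KL} recovers exactly the GARE \eqref{eq:GARE}. Because $A_{\mathrm{cl},\star}$ is Schur, the Lyapunov solution is unique, so $P_{K_\star,L_\star}=P_\star$.

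Next, for $(K,L)$ with $\rho(A_{\mathrm{cl}})<1$, I will subtract the two Lyapunov equations. Writing $G:=[K^\top\ L^\top]^\top$, $G_\star$ analogously, $\Delta G:=G-G_\star$, $B:=[B_1\ B_2]$ and $R:=\mathrm{blkdiag}(R_u,-R_v)$, the standard completion-of-squares computation (as in the LQR cost-difference lemma) gives
\begin{equation*}
P_{K,L}-P_\star
=
A_{\mathrm{cl}}^\top (P_{K,L}-P_\star) A_{\mathrm{cl}}
+ E(\Delta G),
\end{equation*}
where
\begin{equation*}
E(\Delta G)= \Delta G^\top H(P_\star)\Delta G + \Delta G^\top\bigl(H(P_\star)G_\star - B^\top P_\star A\bigr) + \bigl(H(P_\star)G_\star - B^\top P_\star A\bigr)^\top\Delta G .
\end{equation*}
Here $H(P_\star)=R+B^\top P_\star B$ is the matrix in \eqref{eq:H}. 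By \eqref{eq:NE-gains} we have $H(P_\star)G_\star=B^\top P_\star A$, so the cross terms vanish identically. This cancellation is the envelope/stationarity step. It leaves
\begin{equation*}
P_{K,L}-P_\star=\sum_{j\ge0}(A_{\mathrm{cl}}^\top)^j\,\Delta G^\top H(P_\star)\Delta G\,A_{\mathrm{cl}}^j .
\end{equation*}
I will verify the algebra by expanding $Q+G^\top R G+(A-BG)^\top P_\star(A-BG)$ around $G_\star$; the second-order term in $\Delta G$ is precisely $\Delta G^\top(R+B^\top P_\star B)\Delta G$.

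Taking the trace against $\Sigma_w$ then gives
\begin{equation*}
|J(K,L)-J_\star|\le \|H(P_\star)\|_2\,\|\Delta G\|_F^2\,\mathrm{tr}\Bigl(\sum_j A_{\mathrm{cl}}^j\Sigma_w (A_{\mathrm{cl}}^\top)^j\Bigr),
\end{equation*}
with $\|\Delta G\|_F^2=\|\Delta K\|_F^2+\|\Delta L\|_F^2$. Note that $H(P_\star)$ is indefinite, so I bound the absolute value by $\|H(P_\star)\|_2$ rather than relying on positivity.

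The main obstacle is to make the steady-state covariance $\Sigma_{K,L}:=\sum_j A_{\mathrm{cl}}^j\Sigma_w(A_{\mathrm{cl}}^\top)^j$ uniformly bounded on a neighborhood. I will choose $\mathcal N$ small enough that $\|A_{\mathrm{cl}}-A_{\mathrm{cl},\star}\|\le\|B\|\|\Delta G\|$ is small. Because $A_{\mathrm{cl},\star}$ is Schur, there is $X\succ0$ with $A_{\mathrm{cl},\star}^\top X A_{\mathrm{cl},\star}\preceq(1-\epsilon)X$, and by continuity $A_{\mathrm{cl}}^\top X A_{\mathrm{cl}}\preceq(1-\epsilon/2)X$ on $\mathcal N$. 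This yields $\|A_{\mathrm{cl}}^j\|_2\le c\,(1-\epsilon/2)^{j/2}$ with $c=\sqrt{\kappa(X)}$, and hence
\begin{equation*}
\mathrm{tr}(\Sigma_{K,L})\le \frac{2c^2\,\mathrm{tr}(\Sigma_w)}{\epsilon}.
\end{equation*}
Setting $C_J:=2c^2\|H(P_\star)\|_2\,\mathrm{tr}(\Sigma_w)/\epsilon$ completes the argument, and in fact shows that the hypothesis $\rho(A_{\mathrm{cl}})<1$ holds automatically on $\mathcal N$.
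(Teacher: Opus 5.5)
Your argument is correct, and it differs from the paper's in where the Lyapunov recursion is centered. The paper subtracts the GARE from \eqref{eq:lyap_KL} using the \emph{nominal} closed loop. This gives $\Delta P = A_{\mathrm{cl},\star}^\top \Delta P\, A_{\mathrm{cl},\star} + \Delta K^\top R_u \Delta K - \Delta L^\top R_v \Delta L + \delta B^\top P_\star \delta B + R_1 + R_{\ge 3}$. The paper then cancels $R_1$ by stationarity and inverts the fixed operator $\mathcal{L}_\star$, which is already bounded in the proof of Theorem~\ref{thm:lipschitz_main}. It treats $R_{\ge 3}$, which contains $\Delta P$ multiplied by $\delta B$, as a cubic remainder absorbed on a small neighborhood.

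You instead keep the \emph{perturbed} closed loop $A_{\mathrm{cl}}$ in the recursion, which makes the identity exact: $J(K,L)-J_\star=\mathrm{tr}\bigl(\Delta G^\top H(P_\star)\Delta G\,\Sigma_{K,L}\bigr)$ for every stabilizing $(K,L)$. Your quadratic form is exactly the paper's three quadratic terms combined, so its AM--GM step collapses to the single factor $\|H(P_\star)\|_2$. The price is that the operator you invert depends on $(K,L)$. Locality therefore enters only through a uniform bound on $\mathrm{tr}(\Sigma_{K,L})$, which your common Lyapunov matrix $X$ supplies, and which also shows, as you note, that the Schur hypothesis is automatic on $\mathcal N$.

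What your route buys is rigor. The paper asserts that $\Delta P=O(\|\Delta K\|^2+\|\Delta L\|^2)$ ``by the preceding quadratic structure,'' which is essentially the conclusion itself. To be airtight it needs an explicit absorption step: rearrange $\|\Delta P\|_F\le\|\mathcal{L}_\star^{-1}\|\bigl(C\|\Delta G\|_F^2+c\,\|\delta B\|\,\|\Delta P\|_F\bigr)$ for small $\delta B$. Your identity has no remainder to absorb. It also exposes the saddle structure directly. Since $R_u+B_1^\top P_\star B_1\succ 0$ and, by Assumption~\ref{assm:game}(ii), $-R_v+B_2^\top P_\star B_2\prec 0$, the form $\Delta G^\top H(P_\star)\Delta G$ is positive semidefinite when $\Delta L=0$ and negative semidefinite when $\Delta K=0$. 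This gives $J(K,L_\star)\ge J_\star\ge J(K_\star,L)$ for all stabilizing deviations.
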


Lemma~\ref{lemma:quadratic_cost} shows that in a neighborhood of the saddle point, the deviation of the average cost in \eqref{eq:JKL} is locally quadratic in the gain perturbations. The detailed proof is given in Appendix~\ref{app:cost_gap_proof}.

\subsection{Regret Decomposition and Bounding}

With the local cost perturbation result established above, in this subsection, we combine it with the episodic update structure of Algorithm~\ref{alg:ccepi} to derive the regret decomposition and bound each resulting term.

Let $\mathcal{K}_T$ denote the number of completed episodes up to time $T$, with
$\mathcal{E}_k=\{t_k,\dots,t_{k+1}-1\}$ and $T_k=t_{k+1}-t_k$.

On the event of Lemma~\ref{lemma:confidence},
\begin{equation}
\begin{aligned}
\bigl|\mathrm{Reg}(T)\bigr|
&\le
C_{\mathrm{tr}}\,\mathcal{K}_T
\;+\;
C_{\mathrm{ex}}\!\sum_{t=0}^{T-1}
\mathbb{E}\bigl[\|\eta_t\|^2 + \|\zeta_t\|^2\bigr] \\
& \qquad+
\sum_{k=0}^{\mathcal{K}_T}
T_k\bigl|J(K_k,L_k)-J_\star\bigr|,
\label{eq:regret_decomp}
\end{aligned}
\end{equation}
where $C_{\mathrm{tr}}$ depends on the spectral gap $\gamma$, and the
cost norms, $C_\mathrm{ex}$ depends on $\|R_u\|$ and $\|R_v\|$, and the steady-state
covariance bound.
 
The decomposition follows by partitioning the regret into episodes,
separating transient mixing costs from steady-state excess within each
episode, and isolating the exploration injection. We now bound the three terms in turn.
 
We first bound the term in \eqref{eq:regret_decomp}. For the transient cost $C_{\mathrm{tr}}\,\mathcal{K}_T$, each controller update at an episode boundary causes a cost bounded by a constant $C_{\mathrm{tr}}$ because $A_{\mathrm{cl}}$ is Schur stable. According to \cite{abbasi2011regret}, doubling trick \eqref{eq:episode} then ensures $\mathcal{K}_T = O(d\log T)$, giving
\begin{equation}
  C_{\mathrm{tr}}\,\mathcal{K}_T = O(d\log T).
  \label{eq:term_I}
\end{equation}
For the exploration cost $C_{\mathrm{ex}}\!\sum_{t=0}^{T-1}
\mathbb{E}\bigl[\|\eta_t\|^2 + \|\zeta_t\|^2\bigr]$, both players inject i.i.d.\ perturbations $\eta_t\sim\mathcal{N}(0,\sigma_\eta^2 I_{m_1})$ and
$\zeta_t\sim\mathcal{N}(0,\sigma_\zeta^2 I_{m_2})$ with
$\sigma_\eta^2=\sigma_\zeta^2=T^{-1/2}$ \cite{dean2018regret}, which yields
\begin{equation}
  C_{\mathrm{ex}}\!\sum_{t=0}^{T-1}
  \mathbb{E}\bigl[\|\eta_t\|^2+\|\zeta_t\|^2\bigr]
  = C_{\mathrm{ex}}\,(m_1+m_2)\sqrt{T}.
  \label{eq:term_II}
\end{equation}
For the policy gap $\sum_{k=0}^{\mathcal{K}_T}
T_k\bigl|J(K_k,L_k)-J_\star\bigr|$, chaining \eqref{eq:cost_gap_quadratic} with the gain Lipschitz bounds of \eqref{eq:corKL_main} leads to
\begin{equation}
\bigl|J(K_k,L_k)-J_\star\bigr|
\;\le\;
C_J(C_K^2+C_L^2)\,
\bigl\|\widetilde{\theta}_k-\theta_\star\bigr\|_2^2.
\label{eq:cost_gap_chain}
\end{equation}
On the event of Lemma~\ref{lemma:confidence} and under
Assumption~\ref{assum:excitation},
$\|\widetilde{\theta}_k-\theta_\star\|_2^2
\le 4\beta_k(\delta)^2/(\underline{\nu}\,t_k)$.
Since $\beta_k(\delta)^2=O(nd\log T)$ and the doubling rule gives
$T_k\le t_k$, we have
\begin{align}
  &\sum_{k=k_0}^{\mathcal{K}_T} T_k\bigl|J(K_k,L_k)-J_\star\bigr| \notag\\
  &\quad\;\le\;
  \frac{4C_J(C_K^2+C_L^2)}{\underline{\nu}}
  \sum_{k=k_0}^{\mathcal{K}_T}\frac{T_k\,\beta_k^2}{t_k}
  \;=\;
  O(nd^2\log^2 T).
  \label{eq:term_III}
\end{align}
 
\subsection{Main Result}
 
\begin{theorem}
\label{thm:regret_main}
With probability at least $1-\delta$,
Algorithm~\ref{alg:ccepi} with $\sigma_\eta^2=T^{-1/2}$ and doubling episodes achieves
\begin{equation}
{\;
\mathrm{Reg}(T)
\;=\;
{O}\!\left(\sqrt{T}\right),
\;}
\label{eq:regret_bound}
\end{equation}
\end{theorem}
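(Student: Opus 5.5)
The plan is to work on the high-probability event $\mathcal{G}$ of Lemma~\ref{lemma:confidence}, which has probability at least $1-\delta$. On $\mathcal{G}$ I will start from the decomposition \eqref{eq:regret_decomp} and add the three bounds \eqref{eq:term_I}, \eqref{eq:term_II}, \eqref{eq:term_III}. The resulting estimate is
\begin{equation*}
|\mathrm{Reg}(T)| \le O(d\log T) + C_{\mathrm{ex}}(m_1+m_2)\sqrt{T} + O(nd^2\log^2 T) + R_{\mathrm{early}},
\end{equation*}
where $R_{\mathrm{early}}$ is the contribution of the initial episodes $k<k_0$ that \eqref{eq:term_III} leaves out. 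Since $\log^2 T=o(\sqrt T)$, the exploration term dominates and gives $O(\sqrt T)$, provided $R_{\mathrm{early}}$ is $O(1)$ or at most $O(\sqrt T)$.

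The first step is to justify that every deployed gain pair is stabilizing with a uniform margin. The certified surrogate $\widetilde\theta_k$ lies in $\theta_{\mathrm{reg}}(\mu,\gamma)$, and $\widetilde\theta_k\to\theta_\star$ at the rate of Lemma~\ref{lemma:lemma2}. Hence for $k$ large enough, $\widetilde\theta_k\in U$ (from Theorem~\ref{thm:lipschitz_main}) and $(K_k,L_k)\in\mathcal N$ (from Lemma~\ref{lemma:quadratic_cost}). The main point to check is that $\rho(A-B_1K_k-B_2L_k)<1$ under the \emph{true} dynamics, not just the surrogate ones. This follows because the gains converge to $(K_\star,L_\star)$, whose closed loop has spectral radius at most $1-\gamma$ by Assumption~\ref{assm:game} and the regularity margin; continuity of the spectral radius then applies. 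I enlarge $k_0$ so that all these conditions hold for $k\ge k_0$. Only then are the chain \eqref{eq:cost_gap_chain} and the transient constant $C_{\mathrm{tr}}$ legitimate.

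The second step handles the early episodes $k<k_0$. Because $t_{k_0}$ is determined by the excitation condition of Assumption~\ref{assum:excitation} and the doubling rule, and not by $T$, these episodes cover a burn-in window $[0,t_{k_0})$ of length independent of $T$. On this window I bound the per-step expected cost by a constant. The argument is that the state covariance stays bounded, since each certified model gives a stabilizing gain in the above sense and $x_0\sim\mathcal D$ has finite second moment. So $R_{\mathrm{early}}=O(1)$. If true-system stability during burn-in cannot be guaranteed from the margins alone, I would instead make it an explicit assumption on the initial certified model $\widetilde\theta_0$ and gains $(K_0,L_0)$; this still gives a $T$-independent constant.

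The third step checks the summation in \eqref{eq:term_III}. Using $\beta_k^2=O(nd\log T)$ (via $\log\det V_k\le d\log(\lambda+t_k\max\|z_t\|^2/d)$ and a high-probability bound on $\|z_t\|$) together with $T_k\le t_k$, each summand is $O(nd\log T)$. There are $\mathcal K_T=O(d\log T)$ summands, which gives the stated $O(nd^2\log^2T)$.

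The step I expect to be the main obstacle is the claim $T_k\le t_k$ from the determinant doubling rule. The doubling condition controls $\det V_t$, not elapsed time, so this inequality is not automatic. My first attempt would combine it with Assumption~\ref{assum:excitation}: $\det V_t\ge(\underline\nu t)^d$ together with an upper bound $\det V_t\le(\lambda+ct)^d$ should force $\det V$ to double within a constant-factor time increase, giving $T_k\le c' t_k$. That suffices for the same bound up to constants. Finally, the lower-order contributions are absorbed into $O(\sqrt T)$, and the statement holds on $\mathcal G$, hence with probability at least $1-\delta$.
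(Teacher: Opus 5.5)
Your proposal is correct and takes the same route as the paper. The paper's proof simply substitutes \eqref{eq:term_I}--\eqref{eq:term_III} into \eqref{eq:regret_decomp} on the event of Lemma~\ref{lemma:confidence} and notes that the $\sqrt{T}$ exploration term dominates the logarithmic terms. Your additional checks go beyond what the paper does: true-system stability of the deployed gains, the episodes $k<k_0$ omitted from \eqref{eq:term_III}, and weakening the paper's asserted $T_k\le t_k$ to $T_k\le c'\,t_k$. Like the paper, though, you rely on $\underline{\nu}$ and $k_0$ in Assumption~\ref{assum:excitation} being independent of $T$, which is hard to reconcile with input exploration of variance $\sigma_\eta^2=T^{-1/2}$.
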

 
\begin{proof}
Substituting \eqref{eq:term_I}--\eqref{eq:term_III} into \eqref{eq:regret_decomp}, for $T$ sufficiently large, which holds with probability at least
$1-\delta$,
\begin{align*}
\bigl|\mathrm{Reg}(T)\bigr|
&\le
{O(d\log T)}
\;+\;
{O\!\bigl(\sqrt{T}\bigr)}
\;+\;
{O(nd^2\log^2 T)} \\
&= O\!\bigl(\sqrt{T}\bigr).
\end{align*}
\end{proof}
 
\section{Simulation and Results}
\label{sec:simulation}

In this section, we provide numerical results to illustrate the performance of the proposed method in terms of system identification, policy learning, and regret. Specifically, the simulation is designed to highlight four aspects of the method: i) the model estimates, referring to the inferred system parameters computed from the data, improve over time, ii) the feedback gains, which are the control inputs for both players, approach the true equilibrium solution, iii) the certified surrogate, meaning the model adjusted to satisfy the regularity constraints, remains distinct from the raw estimate when necessary, iv) the regret decreases over time, indicating convergence of the learning process.

In this numerical setting, we consider a system with

\[
A=
\begin{bmatrix}
0.85 & 0.10 & 0.10\\
0.10 & 0.62 & 0.08\\
0.10 & 0.06 & 0.72
\end{bmatrix},
B_1=
\begin{bmatrix}
0.80\\
0.25\\
0.12
\end{bmatrix},
B_2=
\begin{bmatrix}
0.10\\
0.08\\
0.15
\end{bmatrix}.
\]

We set $Q=I_3$, $R_u = [1.1]$, and $R_v = [2.5]$.
The disturbance is Gaussian with covariance
$\Sigma_w = \sigma_w^2 I_3$ and $\sigma_w = 0.01$. The simulation runs through a horizon $T=50,000$, with the regularization parameter $\lambda=1$, the confidence failure probability $\delta=0.2$. The initial state is chosen as $x_0=[1.2,\,-0.90,\,0.70]^\top$. The initial candidate model is selected as a perturbed version of
the true parameter matrix and is accepted only if it
satisfies the regularity test. A benchmark value in the regret analysis is computed as $J_\star=\mathrm{tr}(P_\star \Sigma_w)$ where $P_\star$ is the Riccati solution associated with the true system dynamics.

\begin{figure}[H]
    \centering
    \includegraphics[width=\linewidth]{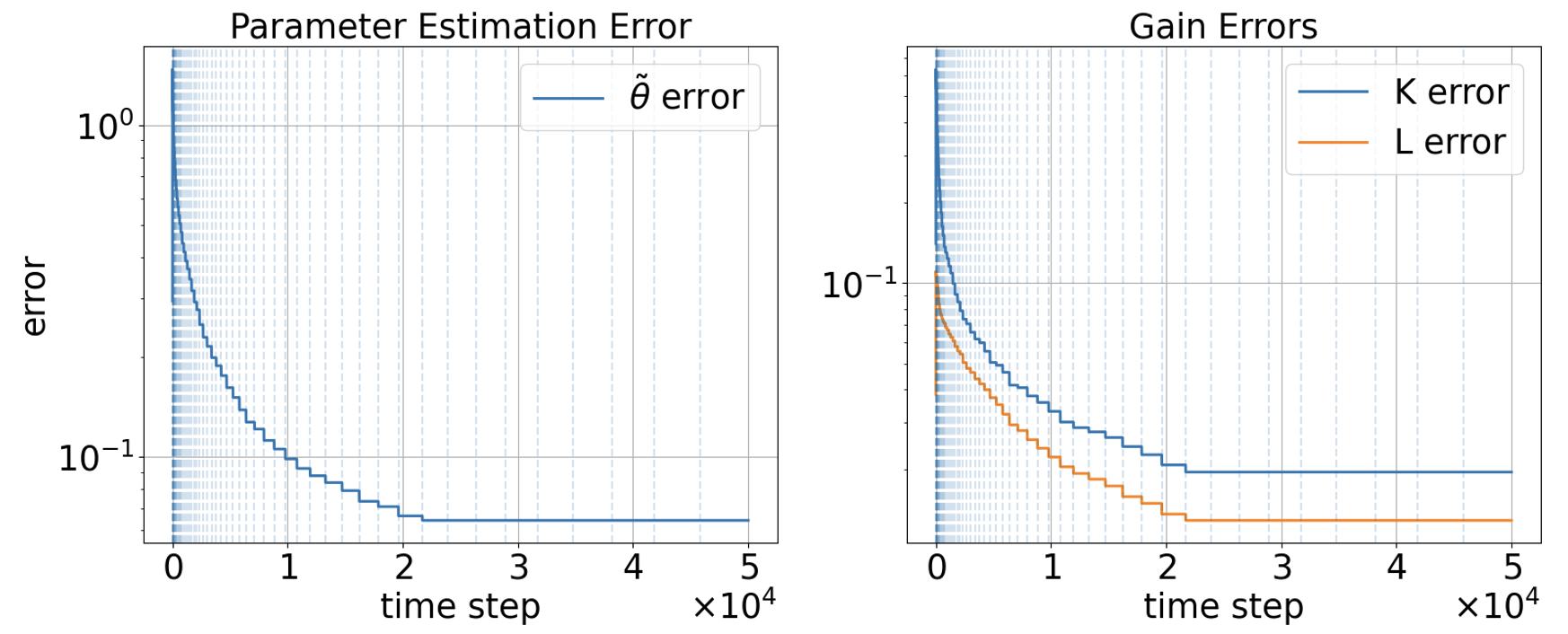}
    \caption{Theta estimation error (left) and gain error for both players (right).}
    \label{fig:error}
    \vspace{-0.2cm}
\end{figure}

Figure~\ref{fig:error} shows the convergence of the model estimation and the corresponding feedback gains. The left figure shows the error $\|\widetilde{\theta}_k-\theta_\star\|_2$ across update episodes, which indicates that the feasible model used for control moves progressively closer to the true dynamics as more data are collected. The right figure shows the gain errors $\|K_k-K_\star\|_F$ and $\|L_k-L_\star\|_F$. Both quantities decrease over time, showing that the computed saddle-point feedback approaches the true Nash gains. In Fig.~\ref{fig:error}, vertical lines indicate the time when the policy is updated, where both errors drop. This is consistent with the doubling schedule update rule used in Section~\ref{subsec:control_update}.

Figure~\ref{fig:shrinkage} shows the role of the model shrinkage step, and the observed regret behavior. The left figure compares the raw estimator error $\|\widehat{\theta}_k - \theta_\star\|_2$ with the certified model error $\|\widetilde{\theta}_k - \theta_\star\|_2$. The visible gap between the two curves reflects the process of selecting a more conservative model to guarantee the GARE solvability. The right figure shows the empirical regret normalized by $\sqrt{t}$. The curve remains bounded and gradually stabilizes, which is consistent with the sublinear regret result established in Theorem~\ref{thm:regret_main}.

\begin{figure}[t]
    \centering
    \includegraphics[width=\linewidth]{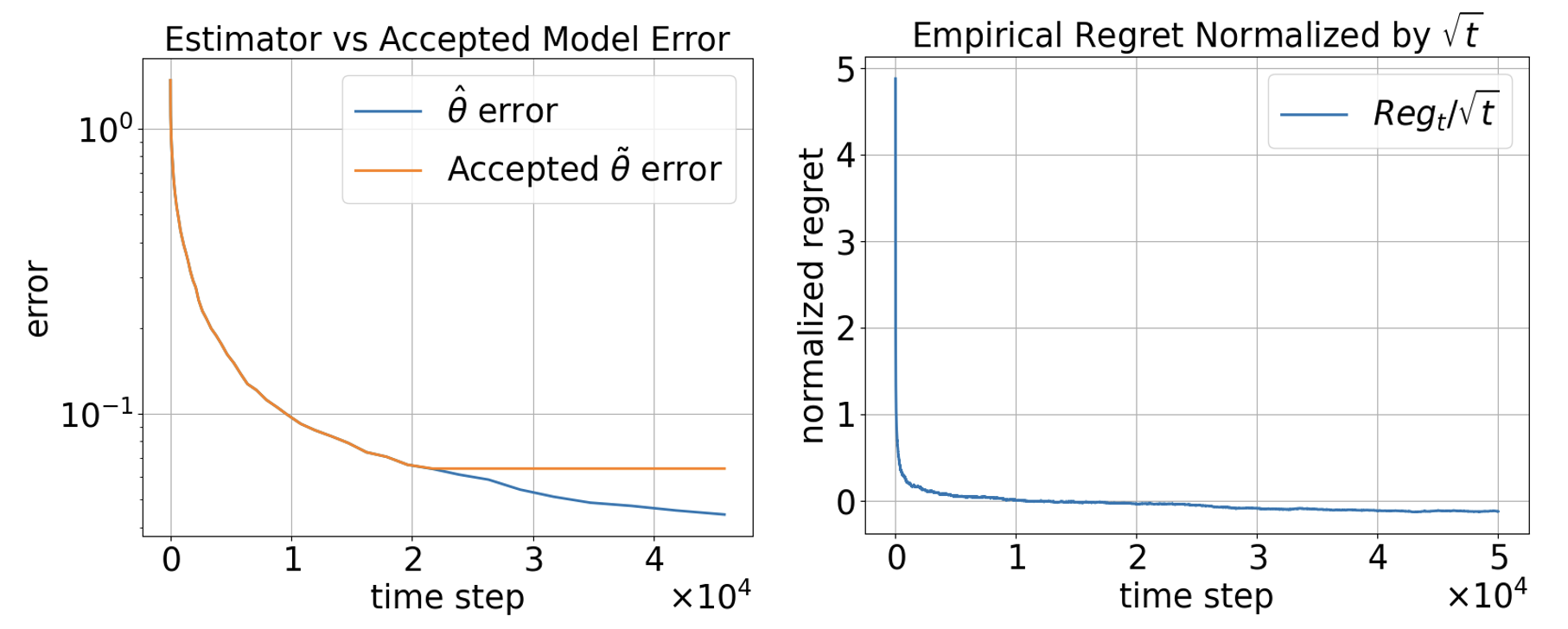}
    \caption{Comparison of $\widetilde \theta$ and $\widehat \theta$ (left), and regret convergence (right).}
    \label{fig:shrinkage}
    \vspace{-0.2cm}
\end{figure}

Overall, the simulation results verify the intended mechanism of the proposed algorithm, including parameter estimation, surrogate model shrinkage, updating gains for both players, and achieving stable learning performance. Although this numerical experiment is illustrative, it provides empirical evidence for the certified update rule and shows the potential of implementing it in a more complex but well-defined system.

\section{Concluding Remarks}
\label{sec:conclusion}

In this paper, we developed a certified online learning algorithm for a discrete-time LQ zero-sum game with unknown dynamics. We combined ridge least squares estimation, confidence set construction, and certified model selection to compute stabilizing saddle-point feedback policies. We showed that the selected surrogate model remains within a regular region that guarantees solvability of the GARE. We established a regret bound via a regret decomposition argument to demonstrate convergence of the algorithm. Numerical simulations further illustrated that the model estimates and feedback gains progressively converge to their true values, while the normalized regret remains bounded.

A natural direction for future research is to relax the persistent excitation assumption or provide a more rigorous justification for this condition. Extending the algorithm to partially observed games, to multi-player settings, or to time-varying systems also constitutes an important avenue for further investigation.



\bibliographystyle{IEEEtran} 
\bibliography{ref,ids}

@article{Xu2025when,
	author = {Xu, Gehui and Parisini, Thomas and Malikopoulos, Andreas A},
	journal = {arXiv preprint arXiv:2508.13450},
	title = {When Does Selfishness Align with Team Goals? {A} Structural Analysis of Equilibrium and Optimality},
	year = {2025}}

@article{Malikopoulos2024,
	author = {Malikopoulos, Andreas A.},
	journal = {European Journal of Control},
	number = {Part A},
	pages = {101043},
	title = {Combining Learning and Control in Linear Systems},
	volume = {80},
	year = {2024}}

@inproceedings{le2023multirobot,
	author = {Le, Viet-Anh and Chalaki, Behdad and Tadiparthi, Vaishnav and Mahjoub, Hossein Nourkhiz and D'sa, Jovin and Moradi-Pari, Ehsan and Malikopoulos, Andreas A},
	booktitle = {2024 IEEE International Conference on Robotics and Automation (ICRA)},
	organization = {IEEE},
	pages = {4834--4840},
	title = {{Multi-Robot Cooperative Navigation in Crowds: A Game-Theoretic Learning-Based Model Predictive Control Approach}},
	year = {2024}}

@article{Malikopoulos2022a,
	author = {Malikopoulos, Andreas A},
	journal = {Automatica},
	number = {110912},
	title = {Separation of Learning and Control for Cyber-Physical Systems},
	volume = {151},
	year = {2023}}

@article{chremos2022CSMArticle,
	author = {Chremos, Ioannis Vasileios and Malikopoulos, Andreas A},
	journal = {IEEE Control Systems},
	number = {1},
	pages = {20--45},
	title = {Mechanism Design Theory in Control Engineering: A Tutorial and Overview of Applications in Communication, Power Grid, Transportation, and Security Systems},
	volume = {44},
	year = {2024}}

@inproceedings{kounatidis2025combined,
	arxivid = {2510.00308},
	author = {Kounatidis, Panagiotis and Malikopoulos, Andreas A.},
	booktitle = {65th American Control Conference (ACC)},
	note = {in review},
	title = {Combined Learning and Control: A New Paradigm for Optimal Control with Unknown Dynamics},
	year = {2025}}

@article{abbasi2011improved,
  title={Improved algorithms for linear stochastic bandits},
  author={Abbasi-Yadkori, Yasin and P{\'a}l, D{\'a}vid and Szepesv{\'a}ri, Csaba},
  journal={Advances in neural information processing systems},
  volume={24},
  year={2011}
}

@book{bacsar2008h,
  title={H-infinity optimal control and related minimax design problems: a dynamic game approach},
  author={Ba{\c{s}}ar, Tamer and Bernhard, Pierre},
  year={2008},
  publisher={Springer Science \& Business Media}
}

@book{lewis2012optimal,
  title={Optimal control},
  author={Lewis, Frank L and Vrabie, Draguna and Syrmos, Vassilis L},
  year={2012},
  publisher={John Wiley \& Sons}
}

@article{dean2018regret,
  title={Regret bounds for robust adaptive control of the linear quadratic regulator},
  author={Dean, Sarah and Mania, Horia and Matni, Nikolai and Recht, Benjamin and Tu, Stephen},
  journal={Advances in Neural Information Processing Systems},
  volume={31},
  year={2018}
}

@article{mania2019certainty,
  title={Certainty equivalence is efficient for linear quadratic control},
  author={Mania, Horia and Tu, Stephen and Recht, Benjamin},
  journal={Advances in neural information processing systems},
  volume={32},
  year={2019}
}

@inproceedings{simchowitz2020naive,
  title={Naive exploration is optimal for online lqr},
  author={Simchowitz, Max and Foster, Dylan},
  booktitle={International Conference on Machine Learning},
  pages={8937--8948},
  year={2020},
  organization={PMLR}
}

@book{ioannou1996robust,
  title={Robust adaptive control},
  author={Ioannou, Petros A and Sun, Jing},
  volume={1},
  year={1996},
  publisher={PTR Prentice-Hall Upper Saddle River, NJ}
}

@article{zhang2019policy,
  title={Policy optimization provably converges to Nash equilibria in zero-sum linear quadratic games},
  author={Zhang, Kaiqing and Yang, Zhuoran and Basar, Tamer},
  journal={Advances in Neural Information Processing Systems},
  volume={32},
  year={2019}
}

@article{al2007model,
  title={Model-free Q-learning designs for linear discrete-time zero-sum games with application to H-infinity control},
  author={Al-Tamimi, Asma and Lewis, Frank L and Abu-Khalaf, Murad},
  journal={Automatica},
  volume={43},
  number={3},
  pages={473--481},
  year={2007},
  publisher={Elsevier}
}

@article{nortmann2024nash,
  title={Nash equilibria for linear quadratic discrete-time dynamic games via iterative and data-driven algorithms},
  author={Nortmann, Benita and Monti, Andrea and Sassano, Mario and Mylvaganam, Thulasi},
  journal={IEEE Transactions on Automatic Control},
  volume={69},
  number={10},
  pages={6561--6575},
  year={2024},
  publisher={IEEE}
}

@inproceedings{abbasi2011regret,
  title={Regret bounds for the adaptive control of linear quadratic systems},
  author={Abbasi-Yadkori, Yasin and Szepesv{\'a}ri, Csaba},
  booktitle={Proceedings of the 24th annual conference on learning theory},
  pages={1--26},
  year={2011},
  organization={JMLR Workshop and Conference Proceedings}
}

@ARTICLE{Ye2023gamesurvery,
  author={Ye, Maojiao and Han, Qing-Long and Ding, Lei and Xu, Shengyuan},
  journal={Proceedings of the IEEE}, 
  title={Distributed Nash Equilibrium Seeking in Games With Partial Decision Information: A Survey}, 
  year={2023},
  volume={111},
  number={2},
  pages={140-157},
  doi={10.1109/JPROC.2023.3234687}}

@article{ren2025chance,
  title={Chance-constrained linear quadratic gaussian games for multi-robot interaction under uncertainty},
  author={Ren, Kai and Salizzoni, Giulio and G{\"u}rsoy, Mustafa Emre and Kamgarpour, Maryam},
  journal={IEEE Control Systems Letters},
  year={2025},
  publisher={IEEE}
}

@article{xu2025game,
  title={Game-Theoretic Learning-Based Mitigation of Insider Threats},
  author={Xu, Gehui and Chen, Kaiwen and Parisini, Thomas and Malikopoulos, Andreas A},
  journal={arXiv preprint arXiv:2512.03222},
  year={2025}
}

@article{liu2023potential,
  title={Potential game-based decision-making for autonomous driving},
  author={Liu, Mushuang and Kolmanovsky, Ilya and Tseng, H Eric and Huang, Suzhou and Filev, Dimitar and Girard, Anouck},
  journal={IEEE Transactions on Intelligent Transportation Systems},
  volume={24},
  number={8},
  pages={8014--8027},
  year={2023},
  publisher={IEEE}
}

\section*{APPENDIX}
\subsection{Proof of Theorem~\ref{thm:lipschitz_main}}
\label{app:lipschitz_proof}
\begin{proof}
The proof proceeds in three steps:
(i)~we show that $D_P F(P_\star,\theta_\star)$ reduces to an invertible Lyapunov operator
and bound its inverse;
(ii)~we bound $D_\theta F$ and apply the implicit function theorem to
obtain the Lipschitz bound on $P(\theta)$;
(iii)~we transfer this bound to the gains $(K(\theta), L(\theta))$.

Throughout, $F$, $\Phi$, and $A_{\mathrm{cl}}$ are as defined in
Section~\ref{subsec:perturbation_main}.
For a given $(P,\theta)$ in the strong regularity region, we abbreviate
$K := K(P,\theta)$, $L := L(P,\theta)$,
and $A_{\mathrm{cl}} := A - B_1 K - B_2 L$.
We define the stage saddle functional as
\begin{align*}
\mathcal{J}(P,\theta,K,L)
&:= Q + K^\top R_u K - L^\top R_v L \notag 
 + A_{\mathrm{cl}}^\top PA_{\mathrm{cl}}.
\label{eq:J_appendix}
\end{align*}
By construction, $\Phi(P,\theta)=\mathcal{J}(P,\theta,K(P,\theta),L(P,\theta)).$ For any direction \(X\in\mathbb{R}^{n\times n}\), applying the chain rule,
\begin{align*}
D_P\Phi(P,\theta)[X]
&=
D_P\mathcal{J}[X]
+
D_K\mathcal{J}\big[D_PK[X]\big]
\notag\\
&\quad
+
D_L\mathcal{J}\big[D_PL[X]\big].
\end{align*}

Since \((K,L)\) is the stage saddle pair, the first-order stationarity conditions with respect to $(K,L)$ imply $D_K\mathcal{J}=0$ and $D_L\mathcal{J}=0$, which leads to
\begin{equation*}
D_P\Phi(P,\theta)[X]
=
D_P\mathcal{J}[X]
=
A_{\rm cl}^\top X A_{\rm cl}.
\end{equation*}

Since \(F(P,\theta)=P-\Phi(P,\theta)\), we obtain $D_PF(P,\theta)[X]
=
X-A_{\rm cl}^\top X A_{\rm cl}.$ At $(P_\star,\theta_\star)$, this becomes the Lyapunov operator
$\mathcal{L}_\star(X)
:=
X-A_{{\rm cl},\star}^\top X A_{{\rm cl},\star}.$ which is invertible because $A_{{\rm cl},\star}$ is Schur, with inverse
\begin{equation}
\mathcal{L}_\star^{-1}(X)
=
\sum_{t=0}^{\infty}
(A_{{\rm cl},\star}^\top)^t X (A_{{\rm cl},\star})^t.
\label{eq:lyap_inverse_appendix}
\end{equation}

To bound $\|\mathcal{L}_\star^{-1}\|_{F \to F}$, let
$S_\star \succ 0$ satisfy
$I \preceq S_\star \preceq \kappa I$ and
$A_{\mathrm{cl},\star}^\top S_\star\, A_{\mathrm{cl},\star}
\preceq (1-\gamma)^2 S_\star$,
and define the weighted Frobenius norm
$\|X\|_{S_\star} := \|S_\star^{1/2} X S_\star^{1/2}\|_F$.
Then $\|A_{\mathrm{cl},\star}^t\|_{S_\star \to S_\star} \le (1-\gamma)^t$
for all $t \ge 0$, and summing the geometric series in
\eqref{eq:lyap_inverse_appendix} gives
$\|\mathcal{L}_\star^{-1}(X)\|_{S_\star}
\le \frac{1}{\gamma(2-\gamma)} \|X\|_{S_\star}$. Since $I\preceq S_\star\preceq \kappa I$, the weighted and unweighted norms $\|X\|_F \le \|X\|_{S_\star} \le \kappa \|X\|_F$, and therefore
\begin{equation}
\|\mathcal{L}_\star^{-1}\|_{F\to F}
\le
\frac{\kappa}{\gamma(2-\gamma)}.
\label{eq:unweighted_inverse_bound_appendix}
\end{equation}

For a parameter perturbation $\delta\theta$, let $(\delta A,\delta B_1,\delta B_2)$ denote the corresponding block perturbations and define $\delta A_{\rm cl}:=\delta A-\delta B_1K-\delta B_2L.$
Using the same envelope argument and the gain derivatives from $D_\theta\Phi$, we obtain
\begin{equation}
D_\theta F(P,\theta)[\delta\theta]
=
-
\Bigl(
\delta A_{\rm cl}^\top P A_{\rm cl}
+
A_{\rm cl}^\top P\,\delta A_{\rm cl}
\Bigr).
\label{eq:DthetaF_appendix}
\end{equation}
With $\|\delta\theta\|_2^2 := \|\delta A\|_F^2 + \|\delta B_1\|_F^2 + \|\delta B_2\|_F^2$, applying Cauchy-Schwarz gives
\begin{align} 
\|D_\theta F(P,\theta)[\delta\theta]\|_F
&\le
2\|P\|\,\|A_{\mathrm{cl}}\|\,\|\delta A_{\mathrm{cl}}\|_F
\notag\\
&\le
2\|P\|\,\|A_{\rm cl}\|
\sqrt{1+\|K\|^2+\|L\|^2}\,
\|\delta\theta\|_2.
\label{eq:DthetaF_bound_appendix}
\end{align}

Since \(D_PF(P_\star,\theta_\star)=\mathcal{L}_\star\) is invertible and
\(F\) is smooth on the strong regularity set, the implicit function theorem implies that
there exist a neighborhood \(U\) of \(\theta_\star\), a neighborhood \(\mathcal{P}\) of \(P_\star\),
and a unique \(C^1\) map $P:U\to\mathcal{P}$ satisfying $F(P(\theta),\theta)=0, \forall\, \theta\in U.$
Differentiating this identity gives
\begin{equation}
D_\theta P(\theta)
=
-
\bigl(D_PF(P(\theta),\theta)\bigr)^{-1}
D_\theta F(P(\theta),\theta).
\label{eq:IFT_appendix}
\end{equation}

By continuity, after possibly shrinking \(U\), the inverse
\((D_PF(P(\theta),\theta))^{-1}\) remains uniformly bounded on \(U\).
Together with \eqref{eq:DthetaF_bound_appendix} and the continuity of
\(P(\theta)\), \(K(\theta)\), \(L(\theta)\), and \(A_{\rm cl}(\theta)\), this yields a finite constant
$C_P$ such that $\sup_{\theta \in U} \|D_\theta P(\theta)\|_{F \to F} \le C_P$. By mean-value theorem,
\[
\|P(\theta)-P_\star\|_F
\le
C_P\|\theta-\theta_\star\|_2,
\qquad \forall\, \theta\in U.
\]

The saddle gains are determined by the first-order optimality conditions
of the stage game, which define them as smooth functions of $(P,\theta)$
whenever $H(P)$ in~\eqref{eq:H} remains nonsingular.
Since $P(\theta)$ is $C^1$, and $H(P_\star)$ is nonsingular by Assumption \ref{assm:game}, $H(P(\theta),\theta)$ remains uniformly nonsingular on a sufficiently small neighborhood of
$\theta_\star$, the map $\theta\mapsto (K(\theta),L(\theta))$
is $C^1$ on $U$.
Therefore, it is locally Lipschitz, and there exist finite constants $C_K,C_L$ such that
\begin{align*}
\|K(\theta)-K_\star\|_F
&\le
C_K\|\theta-\theta_\star\|_2,
\\
\|L(\theta)-L_\star\|_F
&\le
C_L\|\theta-\theta_\star\|_2.
\end{align*}
This completes the proof.
\end{proof}

\subsection{Proof of Lemma~\ref{lemma:quadratic_cost}}
\label{app:cost_gap_proof}
\begin{proof}
Write $\Delta P:=P_{K,L}-P_\star$,
$\Delta K:=K-K_\star$, $\Delta L:=L-L_\star$, $A_{\mathrm{cl},\star}:=A-B_1K_\star-B_2L_\star$,
and $\delta B:=B_1\Delta K+B_2\Delta L$,
so that $A_{\mathrm{cl}}=A_{\mathrm{cl},\star}-\delta B$.

Subtracting the GARE \eqref{eq:GARE} for $P_\star$ from the Lyapunov
equation~\eqref{eq:lyap_KL} for $P_{K,L}$ and expanding $A_{\mathrm{cl}}=A_{\mathrm{cl},\star}-\delta B$, gives
\begin{align}
\label{eq:DeltaP_app}
\Delta P
&=
{A_{\mathrm{cl},\star}}^\top \Delta P\, A_{\mathrm{cl},\star}
\;+\;
\Delta K^\top R_u\, \Delta K \\
\notag&\quad -\;
\Delta L^\top R_v\, \Delta L
\;+\;
\delta B^\top P_\star\,\delta B
\;+\; R_1 \;+\; R_{\ge 3},
\end{align}
where $R_1$ collects all terms linear in $(\Delta K,\Delta L)$
and $R_{\ge 3}$ collects all remaining terms, each involving
$\Delta P$ multiplied by at least one factor of $\delta B$;
since $\Delta P$ is itself $O(\|\Delta K\|^2+\|\Delta L\|^2)$
by the preceding quadratic structure, these contribute at
cubic order and above. 
The GARE stationarity conditions yield
$B_1^\top P_\star A_{\mathrm{cl},\star} = R_u K_\star$ and
$B_2^\top P_\star A_{\mathrm{cl},\star} = -R_v L_\star$,
which can be verified by substituting~\eqref{eq:NE-gains} and
expanding $A_{\mathrm{cl},\star}$.
These identities cause every linear term in $R_1$ to cancel,
so $R_1=0$.

Applying the
$\mathcal{L}_\star^{-1}$ to \eqref{eq:DeltaP_app} and taking the trace against $\Sigma_w$,
\begin{align}
J(K,L)-J_\star
&=
\mathrm{tr}\!\bigl(
\mathcal{L}_\star^{-1}[\Delta K^\top R_u\,\Delta K]\,\Sigma_w
\bigr)
\notag\\
&\quad-\;
\mathrm{tr}\!\bigl(
\mathcal{L}_\star^{-1}[\Delta L^\top R_v\,\Delta L]\,\Sigma_w
\bigr)
\notag\\
&\quad+\;
\mathrm{tr}\!\bigl(
\mathcal{L}_\star^{-1}[\delta B^\top P_\star\,\delta B]\,\Sigma_w
\bigr)
\notag\\
&\quad+\;
O\!\bigl(\|\Delta K\|^3+\|\Delta L\|^3\bigr).
\label{eq:cost_expansion_app}
\end{align}
The first two terms are bounded by
$\|\mathcal{L}_\star^{-1}\|\,\|\Sigma_w\|$
times the Frobenius norm of the quadratic expression.
Expanding $\delta B^\top P_\star\,\delta B$ and
applying the AM--GM inequality to the cross terms
$\Delta K^\top B_1^\top P_\star B_2\,\Delta L$
shows this norm is at most
$C\bigl(\|\Delta K\|_F^2+\|\Delta L\|_F^2\bigr)$
for a constant $C$ depending on $\|R_u+B_1^\top P_\star B_1\|$,
$\|R_v-B_2^\top P_\star B_2\|$, and $\|B_1^\top P_\star B_2\|$.
In a sufficiently small neighborhood $\mathcal{N}$,
the cubic remainder is absorbed, giving~\eqref{eq:cost_gap_quadratic}.
\end{proof}

\end{document}